\title{The Tree Stabbing Number is not Monotone} 
\author{Wolfgang Mulzer}{Freie Universit\"at Berlin, Germany }{mulzer@inf.fu-berlin.de}{}{}
\author{Johannes Obenaus}{Freie Universit\"at Berlin, Germany}{johannes.obenaus@fu-berlin.de}{}{Partially supported by ERC StG 757609}
\authorrunning{W. Mulzer and J. Obenaus}
\keywords{Stabbing numbers, Monotonicity, Spanning trees}
\newcommand{\R}{\mathbb{R}}
\newcommand{\treestab}[1]{\textsc{tree-stab}(#1)}
\newcommand{\tristab}[1]{\textsc{tri-stab}(#1)}
\newcommand{\pathstab}[1]{\textsc{path-stab}(#1)}
\newcommand{\matstab}[1]{\textsc{mat-stab}(#1)}
\newcommand{\placeholder}{$\cdot$}
\DeclareMathOperator{\conv}{conv}
\begin{document}

\maketitle

\begin{abstract}
Let $P \subseteq \R^2$ be a set of points and $T$ be a spanning tree of $P$. The \emph{stabbing number} of $T$ is the maximum number of intersections any line in the plane determines with the edges of $T$. The \emph{tree stabbing number} of $P$ is the minimum stabbing number of any spanning tree of $P$. We prove that the tree stabbing number is not a monotone parameter, i.e.,\ there exist point sets $P \subsetneq P'$ such that \treestab{$P$}~$>$~\treestab{$P'$}, answering a question by Eppstein \cite[Open~Problem~17.5]{eppstein_2018}.
\end{abstract}

\section{Introduction}

Let $P\subseteq \R^2$ be a set of points in general position, i.e., no three points lie on a common line. A \emph{geometric graph} $G = (P,E)$ is a graph equipped with a drawing where edges are realized as straight-line segments. The \emph{stabbing number} of $G$ is the maximum number of proper intersections that any line in the plane determines with the edges of $G$. Let $\mathcal{G}$ be a graph class (e.g., trees, paths, triangulations, perfect matchings etc.). The \emph{$\mathcal{G}$-stabbing number} of $P$ is the minimum stabbing number of any geometric graph $G = (P, E)$ belonging to $\mathcal{G}$ (as a function of $P$).

Stabbing numbers are a classic topic in computational geometry and received a lot of attention both from an algorithmic as well as from a combinatorial perspective. We mainly focus on the stabbing number of spanning trees (see, e.g., \cite{Welzl1992} for more information), which has numerous applications. For instance, Welzl \cite{Welzl1988PartitionTF} used spanning trees with low stabbing number to efficiently answer triangle range searching queries, Agarwal \cite{Agarwal:1992:RSO:Ray_shooting} used them in the context of ray shooting (also see \cite{Edelsbrunner1989, AGARWAL1993229} for more examples). Furthermore, Fekete, L\"ubbecke and Meijer \cite{Fekete2008} proved $\mathcal{NP}$-hardness of stabbing numbers for several graph classes, namely for spanning trees, triangulations and matchings, though for paths this question remains open.

It is natural to ask whether stabbing numbers are monotone, i.e., does it hold for any pointset $P \subseteq \R^2$ that the $\mathcal{G}$-stabbing number of $P$ is not smaller than the $\mathcal{G}$-stabbing number of any proper subset $P' \subsetneq P$. Recently, Eppstein \cite{eppstein_2018} gave a detailed analysis of several parameters that are monotone and depend only on the point set's order type. Clearly, stabbing numbers depend only on the order type. Eppstein observed that the path stabbing number is monotone \cite[Observation~17.4]{eppstein_2018} and asked whether this is also the case for the tree stabbing number \cite[Open~Problem~17.5]{eppstein_2018}. We prove that neither the tree stabbing number (Corollary~\ref{cor:treestab}) nor the triangulation stabbing number (Corollary~\ref{cor:tristab}) nor the matching stabbing number (Corollary~\ref{cor:matstab}) are monotone. A more detailed analysis can also be found in the second author's Master thesis~\cite{mastersthesis}. Each of the following sections is dedicated to one graph class.

\section{Path Stabbing Number} \label{sec:pathstab}

For completeness we repeat the main argument that the path stabbing number, denoted by \pathstab{\placeholder}, is monotone, which can be found in \cite[Observation 17.4]{eppstein_2018} for example.

\begin{lemma} \label{lem:path}
  Let $G$ be a geometric graph. The following two operations do not increase the stabbing number of $G$:
\begin{enumerate}
    \item Removing a vertex of degree 1.
    \item Replacing a vertex $v$ of degree 2 with the segment connecting its two neighbours $w_1, w_2$. 
\end{enumerate}
\end{lemma}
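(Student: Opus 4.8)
The plan is to prove both parts through a single, stronger local statement: for \emph{every} line $\ell$ in the plane, the number of proper crossings of $\ell$ with the modified graph is at most its number of proper crossings with $G$. Taking the maximum over all $\ell$ then yields the lemma immediately, since by definition the stabbing number is exactly this maximum, and only the edges near the affected vertex change.

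The first operation is essentially free. Deleting a degree-$1$ vertex removes precisely its unique incident edge and leaves everything else intact. Removing an edge can only destroy crossings, never create them, so for every line the crossing count does not increase, and hence neither does the stabbing number. The second operation is the substantive case, and here I would first reduce to lines in \emph{general position} with respect to the vertex set, i.e.\ lines avoiding every vertex and overlapping no edge. This reduction is the one point that needs a little care: a line through a vertex $v$ meets the incident edges only at their common endpoint and so registers no proper crossing there, but a parallel translate toward the side of $v$ carrying more incident edges strictly increases (or preserves) the count, while leaving all other crossings unchanged; applying such perturbations shows the maximum crossing count is already attained by a generic line.

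Now fix a generic line $\ell$. Since $P$ is in general position, $v, w_1, w_2$ span a nondegenerate triangle $\Delta$ whose three sides are exactly $vw_1$, $vw_2$, and the new edge $w_1w_2$, and all other edges of $G$ are untouched by the operation. Thus it suffices to compare the crossings of $\ell$ with $\{vw_1, vw_2\}$ against its crossings with $\{w_1w_2\}$. The key geometric fact is that a line meets the boundary of a triangle in either $0$ or $2$ points. I would then split into cases: if $\ell$ misses $\Delta$, it crosses none of the three sides and both counts are $0$; if $\ell$ crosses $w_1w_2$, then it crosses exactly one of $vw_1, vw_2$, so both counts equal $1$; and if $\ell$ crosses both $vw_1$ and $vw_2$ (cutting off the corner at $v$), it must miss $w_1w_2$, so the old count is $2$ while the new count is $0$. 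In every case the new count is at most the old one, which establishes the local statement and hence the lemma.

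I expect the only genuine obstacle to be the reduction to generic lines, where proper crossings at vertices must be disposed of correctly. Once that is settled, the triangle case analysis is routine and the ``$0$ or $2$ boundary crossings'' fact does all the remaining work.
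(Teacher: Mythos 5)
Your proof is correct and rests on the same key fact as the paper's: a line avoiding $v, w_1, w_2$ that crosses the new segment $\overline{w_1w_2}$ must also cross $\overline{vw_1}$ or $\overline{vw_2}$, so the replacement cannot create a net gain of crossings. The only difference is packaging: the paper compares each line's count in the modified graph against \textsc{Stabbing-Number}$(G)$, splitting on whether the line attains the maximum in $G$ and dismissing maximum-attaining lines through vertices with a \enquote{clearly}, whereas you prove the stronger pointwise inequality for generic lines plus an explicit perturbation reduction --- if anything, your handling of lines through vertices is the more careful of the two.
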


\begin{proof}
Clearly, the first operation cannot increase the stabbing number, since it does not add any new segments.

For the second part, let $G'$ be the geometric graph obtained from $G$ by performing operation 2 and let $\ell$ be an arbitrary line. If $\ell$ has strictly less than \textsc{Stabbing-Number}$(G)$ intersections in $G$, it has at most \textsc{Stabbing-Number}$(G)$ intersections in $G'$, since we added only one segment. Otherwise, if $\ell$ has \textsc{Stabbing-Number}$(G)$ intersections in $G$, it clearly does not pass through any vertex of $G$ and if $\ell$ intersects the newly inserted segment $\overline{w_1w_2}$ it must have also intersected either $\overline{w_1v}$ or $\overline{vw_2}$. 
\end{proof}

\begin{corollary}
\label{cor:path_stab}
\pathstab{\placeholder} is monotone.
\end{corollary}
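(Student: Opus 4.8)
The plan is to show directly that deleting points cannot increase the path stabbing number, i.e.,\ that for any two point sets $P' \subsetneq P$ we have $\pathstab{$P'$} \le \pathstab{$P$}$. To this end I would start from a geometric path $G$ on $P$ that realizes the path stabbing number, so that the stabbing number of $G$ equals $\pathstab{$P$}$, and then transform $G$ into a geometric path on $P'$ without ever increasing its stabbing number.

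The key observation is that the two operations of Lemma~\ref{lem:path} allow me to delete the points of $P \setminus P'$ one at a time while staying inside the class of paths. Writing $G$ as $v_1 - v_2 - \cdots - v_n$, I process the points of $P \setminus P'$ in an arbitrary order. When the point $v$ to be removed is currently an endpoint of the path, it has degree $1$ and I apply operation~1, which leaves a shorter path on the remaining vertices. When $v$ is an internal vertex, it has degree $2$ and I apply operation~2, replacing the two edges incident to $v$ by the single segment joining its two neighbours; the result is again a path, obtained from the old one by bypassing $v$ and directly connecting its two neighbours. In either case Lemma~\ref{lem:path} guarantees that the stabbing number does not increase.

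Iterating this deletion until only the points of $P'$ remain yields, by a straightforward induction on $|P \setminus P'|$, a geometric path on $P'$ whose stabbing number is at most that of $G$, and hence at most $\pathstab{$P$}$. Since $\pathstab{$P'$}$ is by definition the minimum stabbing number over all geometric paths on $P'$, this gives $\pathstab{$P'$} \le \pathstab{$P$}$, which is exactly the claimed monotonicity.

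I do not expect any serious obstacle here; the only points that deserve a moment of care are, first, verifying that each operation indeed keeps the graph a Hamiltonian path on the current vertex set, which is immediate from the two cases above, and second, checking that the bypass segments produced by operation~2 do not accidentally pass through another point, so that the drawing remains a valid geometric path. This last point follows from the general position assumption, which ensures that no three points are collinear.
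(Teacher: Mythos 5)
Your proposal is correct and follows exactly the paper's intended argument: the corollary is an immediate consequence of Lemma~\ref{lem:path}, since every vertex of a geometric path has degree $1$ or $2$, so the points to be deleted can be removed one at a time by the two operations while the graph remains a path and the stabbing number never increases. Your extra care about general position (so that the bypass segment $\overline{w_1w_2}$ avoids other points) is a reasonable sanity check but not a departure from the paper's route.
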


\section{Tree Stabbing Number} \label{sec:treestab}

We construct point sets $P_1 \subsetneq P_2$ of size $n$ and $n+1$ such that \treestab{$P_1$} > \treestab{$P_2$}. The point $p \in P_2\setminus P_1$ we want to remove, must, of course, have degree at least 3 in any spanning tree of minimum stabbing of $P_2$, since otherwise the arguments of Lemma \ref{lem:path} apply. 

Our construction, which is depicted in Figure~\ref{fig:treestab_counter} (a), is as follows. Start with a unit circle around the origin $O$ and place 3 evenly distributed points $x_1,x_2,x_3$ on this circle (in counterclockwise order). Next, add an \enquote{arm} consisting of 2 points $y_i, z_i$ ($i=1,2,3$) at each of the $x_i$ (outside the circle) such that the points $O,x_i,y_i,z_i$ form a convex chain for $i=1,2,3$ (which are all three oriented the same way). These arms need to be flat enough, i.e., the line supporting the segment $\overline{x_iy_i}$ must intersect the interior of the segment $\overline{Ox_{i+2}}$ (indices are taken modulo 3), but also curved enough, i.e., the line supporting the segment $\overline{y_iz_i}$ must have the remaining 8 points on the same side. In particular, there are lines intersecting the segments $\overline{x_iy_i}$, $\overline{y_iz_i}$ and also $\overline{Ox_{i+2}}$ on the one hand and $\overline{y_{i+2}z_{i+2}}$ on the other hand (the red lines in Figure~\ref{fig:treestab_counter} (a)). If there is no danger of confusion, we might omit that indices are taken modulo 3 (as in the previous sentence).

\begin{figure}
    \centering
    \includegraphics[width=\textwidth, page=6]{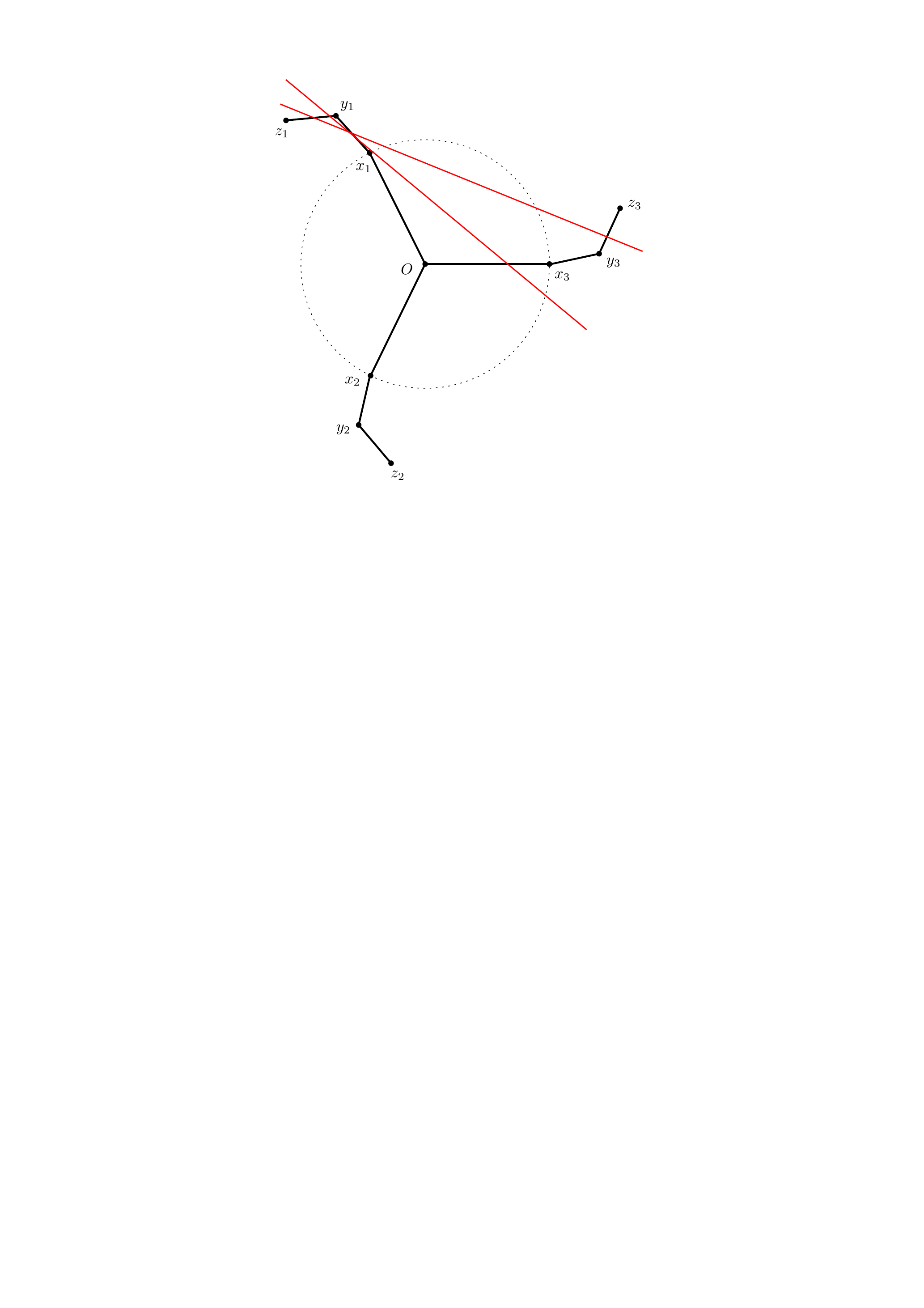}
    \caption{Illustration of a set of (a) 9 points and (b) n points such that removing the point $O$ increases the tree stabbing number.}
    \label{fig:treestab_counter}
\end{figure}

Define the two point sets $P_1, P_2$ (which are both in general position) to be 
\[
P_1 = \{ x_1, y_1, z_1, x_2, y_2, z_2, x_3, y_3, z_3 \}, \quad \qquad P_2 = P_1 \cup \{ O \}.
\]

\begin{lemma} \label{lem:treestabbing_special}
It holds that $\treestab{P_1} = 4$ and $\treestab{P_2} \leq 3$.
\end{lemma}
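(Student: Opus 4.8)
The plan is to prove two separate bounds: the lower bound $\treestab{P_1} = 4$ (which requires showing no spanning tree of the $9$-point set $P_1$ achieves stabbing number $3$, together with exhibiting a tree that achieves $4$), and the upper bound $\treestab{P_2} \leq 3$ (which only requires exhibiting a single good spanning tree on the $10$-point set $P_2$ that uses the extra point $O$ as a high-degree hub).

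For the upper bound $\treestab{P_2} \leq 3$, I would simply construct an explicit spanning tree and verify its stabbing number. The natural candidate, given the construction, is the ``star of arms'' tree: connect $O$ to each $x_i$, and connect each arm as the path $x_i, y_i, z_i$. This gives $O$ degree $3$ and makes each arm a convex chain hanging off the central hub. I would then argue that any line crosses at most $3$ edges: since the three segments $\overline{Ox_i}$ emanate from a common point $O$, a line not through $O$ crosses at most two of them, and one then has to bound how many of the six arm-edges a line can hit, using the ``flat enough / curved enough'' geometric conditions stated in the construction. Careful case analysis on the slope/position of the line relative to the three symmetric arms should cap the total at $3$.

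For the lower bound, the real content is showing $\treestab{P_1} \geq 4$, i.e., that \emph{every} spanning tree of the nine points $x_i, y_i, z_i$ has some line stabbing at least four edges. This is the step I expect to be the main obstacle, since it is a universal statement over all spanning trees rather than the construction of one. My plan is a structural/combinatorial argument exploiting the three-fold symmetry and the convex-position-like layout: the six outer points $z_1, y_1, \dots$ and the way the arms point outward force the tree to ``reach across'' the configuration. I would look for a small family of candidate lines (the red lines described in the text, which simultaneously cross $\overline{x_iy_i}$, $\overline{y_iz_i}$, $\overline{Ox_{i+2}}$, and $\overline{y_{i+2}z_{i+2}}$) and argue by a counting/parity or connectivity argument that no tree can simultaneously avoid giving all of these lines a stab-count of $\geq 4$. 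Concretely, I would try to show that connecting each arm's far endpoint $z_i$ into the tree, combined with keeping the whole graph connected and acyclic on nine vertices (eight edges), forces enough edges to cross one of these designated lines.

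The cleanest approach to the lower bound is likely an exhaustive-but-structured case analysis: fix attention on how the three arms attach to the rest of the tree, use the degree constraint (some arm endpoint must connect ``inward'') and the line conditions to show each configuration admits a $4$-stabbing line. To finish, I would exhibit a concrete spanning tree of $P_1$ achieving stabbing number exactly $4$ (matching the lower bound), for instance a suitable path or caterpillar threading the arms, thereby establishing $\treestab{P_1} = 4$. The delicate part throughout is correctly invoking the geometric hypotheses (the supporting-line conditions on $\overline{x_iy_i}$ and $\overline{y_iz_i}$) to justify which segments a given line must cross; I would make these explicit rather than relying on the figure.
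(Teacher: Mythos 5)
Your decomposition of the lemma is the right one, and your plan for $\treestab{P_2} \leq 3$ (exhibit the star-of-arms tree with $O$ as hub and verify it) is essentially what the paper does. But there is a genuine gap at exactly the step you yourself flag as the main obstacle: the lower bound $\treestab{P_1} \geq 4$. Your proposal for it is not an argument but a research plan --- ``look for a small family of candidate lines and argue by a counting/parity or connectivity argument,'' followed by ``an exhaustive-but-structured case analysis.'' No such case analysis is carried out, no invariant is identified, and it is far from clear that the red lines alone suffice: a spanning tree of $P_1$ has $8$ edges and the space of trees is $9^7 = 4782969$, so the claim that \emph{every} such tree is stabbed $4$ times by one of a handful of designated lines needs a real proof, not a gesture toward one. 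This is not a pedantic complaint: the paper itself does \emph{not} have a structural proof of this bound. It establishes $\treestab{P_1} = 4$ by a computer-aided brute-force search, enumerating all $9^7$ spanning trees of $P_1$ and checking each against a representative set of $\binom{9}{2} = 36$ lines (representative in the sense that every combinatorially distinct line partition of the point set is realized, cf.\ the appendix lemma on representative lines), and the conclusion of the paper explicitly states that turning this computation into a pen-and-paper proof would be interesting, i.e., the authors did not find the argument you are promising.

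Separately, note two smaller points. First, if you do attempt a hand verification of the upper bound for $P_2$, the clean way is the same representative-lines reduction: it suffices to check finitely many lines (one per realizable bipartition of the point set), rather than reasoning informally about ``slopes and positions''; otherwise the case analysis is easy to get wrong. Second, for the equality $\treestab{P_1} = 4$ you also need a tree on $P_1$ with stabbing number exactly $4$, which you do mention; that part is easy (and also falls out of the exhaustive search). The verdict, however, rests on the first issue: as written, the proposal leaves the central claim of the lemma unproven, and the only known proof of that claim is computational.
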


\begin{proof}
This result was obtained by a computer-aided brute-force search (the source code is available on github \cite{code}). In order to compute the stabbing number of a given geometric graph spanning some point set, it is enough to consider a \emph{representative set} $H_P$ of lines. For any line $\ell$ that partitions the point set into two non-empty subsets, there is a line in the representative set inducing the same partitioning. For an $n$-point set in general position, the size of a representative set is $\binom{n}{2}$ (see appendix, Lemma~\ref{lem:repr_lines}). Hence, we have $|H_{P_1}| = 36$ and $|H_{P_2}| = 45$. The sets $H_{P_1}$ and $H_{P_2}$ were also obtained by computer assistance. Any pair of points induces four distinct representative lines, computing these and removing duplicates yields $H_{P_1}$ and $H_{P_2}$ (as in \cite{mulzer_eurocg_2014} for example).

Now, it is enough to compute -- for all $9^7 = 4782969$ possible spanning trees on $P_1$ -- their intersections with the lines in $H_{P_1}$, yielding \treestab{$P_1$} = 4.

On the other hand, for $P_2$ the spanning tree depicted in Figure~\ref{fig:treestab_counter} has stabbing number 3 (again by computing all intersections with lines in $H_{P_2}$) implying \treestab{$P_2$}$ \leq 3$.
\end{proof}

Next, we generalize this construction to arbitrarily large point sets. We simply replace one of the $z_i$ (say $z_1$) by a convex chain $C$ consisting of $k$ points $p_1, \dots, p_k$ (see Figure~\ref{fig:treestab_counter} (b)). Denote the convex chains $x_1y_1C$, $x_2y_2z_2$ and $x_3y_3z_3$ by $C_1$, $C_2$ and $C_3$.

Our goal will be to remove all but two points of $C \cup \{y_1\}$ to get back to our 9-point setting. Of course, it is crucial to keep the relative position of the points as it is in the 9-point set. Thus, place the points $p_1, \dots, p_k$ such that: 

\begin{enumerate}
    \item $O,x_1,y_1,p_1,\dots,p_k$ forms a convex chain.
    \item close enough to $y_1$, so that the order type of the resulting point set is the same no matter which $k-1$ of the points in $C \cup \{y_1\}$ we remove. In particular, no line through any two points not belonging to $y_1,p_1,\dots,p_k$ may separate these points.
    \item for any two segments formed by any triple of points in $C_1$ (consecutively along the convex chain) there is a line intersecting these two segments and also $\overline{y_3z_3}$. To achieve this, $C$ needs to be sufficiently flat and $z_3$ needs to be pushed further away.
\end{enumerate}

Note that Lemma~\ref{lem:treestabbing_special} has been verified to still hold after the modification of pushing $z_3$ further out. Before proving that this construction fulfills the desired properties, we need one more preliminary lemma (see Figure \ref{fig:contraction}).

\begin{lemma} \label{lem:contraction}
Let $G = (V,E)$ be a forest with $c$ connected components and $|V| \geq 4$. Mark three of the vertices as special (call them $v_1, v_2, v_3$) and iteratively remove/replace vertices of degree 1 and 2 (as in Lemma~\ref{lem:path}) until no non-special vertex of degree $\leq 2$ remains. Then the resulting graph is a forest and consists of the three special vertices and at most one non-special vertex.
\end{lemma}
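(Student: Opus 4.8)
The plan is to establish two things: that the two reduction steps of Lemma~\ref{lem:path} keep the graph a forest (so the procedure is well defined and, since each step removes a vertex, terminates), and that the stopping condition pins down the degree sequence tightly enough that a handshake count forces at most one surviving non-special vertex. For the first point I would check forest-preservation directly. Deleting a degree-$1$ vertex obviously cannot create a cycle. For the replacement step, if a degree-$2$ vertex $v$ has neighbours $w_1,w_2$, then in a forest the unique path between $w_1$ and $w_2$ runs through $v$; deleting $v$ hence separates $w_1$ from $w_2$, so adding the edge $\overline{w_1w_2}$ reconnects them without closing a cycle. Since the three special vertices are never removed or replaced, they all survive.

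Next I would read off the structure at termination. Because no non-special vertex of degree $\le 2$ remains, every non-special vertex of the final forest $G'$ has degree $\ge 3$; equivalently, every leaf and every degree-$2$ vertex of $G'$ is special. A component containing no special vertex would consist entirely of non-special vertices and could be reduced (by repeatedly deleting leaves, and finally the last isolated vertex) until it disappears; hence every surviving component contains at least one of $v_1,v_2,v_3$, so $G'$ has at most three components.

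The core of the argument is a degree count inside each tree component $T$. The handshake identity gives $\sum_{v\in T}(\deg v-2)=-2$, and since leaves contribute $-1$, degree-$2$ vertices contribute $0$, and every higher-degree vertex contributes at least $+1$, the number of vertices of degree $\ge 3$ in $T$ is at most $L_T-2$ whenever $T$ has at least two vertices, where $L_T$ is the number of leaves of $T$. As all leaves are special we have $L_T\le s_T$, writing $s_T$ for the number of special vertices in $T$; and as every non-special vertex has degree $\ge 3$, the count of non-special vertices in $T$ is at most $\max(s_T-2,0)$ (single-vertex components are special and contribute $0$). Summing over the at most three components and using $\sum_T s_T=3$, the total number of non-special vertices is at most $\max(3-2,0)=1$, the maximum being reached only when all three special vertices lie in one component (the extremal shape being a claw with $v_1,v_2,v_3$ as its leaves and a single non-special centre). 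This is exactly the assertion.

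I expect the main obstacle to be the cross-component bookkeeping rather than any single local step: one must rule out survival of special-free components and argue that the bound $\max(s_T-2,0)$ cannot be positive for two components simultaneously, which is precisely where the budget $\sum_T s_T=3$ is decisive; the forest-preservation and handshake steps are then routine.
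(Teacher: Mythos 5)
Your proof is correct and follows essentially the same route as the paper: both arguments rest on the observations that components without a special vertex vanish entirely, that every surviving non-special vertex has degree at least $3$, and on a handshake-lemma degree count over the resulting forest. The only difference is bookkeeping --- you count per component (via leaves, which must all be special), whereas the paper performs a single global degree count combined with a case analysis on the number of surviving components $c' \in \{1,2,3\}$ --- and both yield the same conclusion.
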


\begin{figure}
    \centering
    \includegraphics[width=\textwidth]{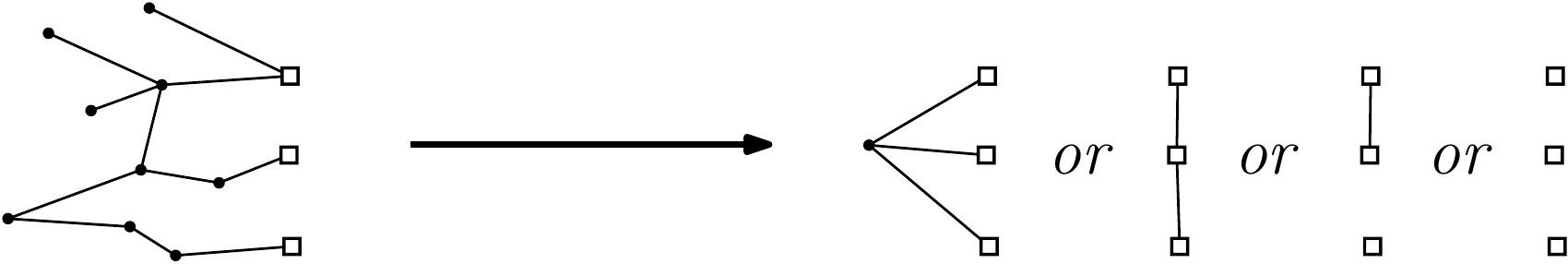}
    \caption{Illustration of Lemma~\ref{lem:contraction}. Special vertices are depicted as squares. Other vertices of degree 1 or 2 are successively removed.}
    \label{fig:contraction}
\end{figure}

\begin{proof}

Let $G' = (V',E')$ be the graph that was obtained from $G$ by repeatedly performing the two operations of Lemma~\ref{lem:contraction} and let $n'$ denote its number of vertices (including the three special). Clearly $G'$ is a forest, since both operations decrease the number of vertices and the number of edges by exactly 1 and cannot create cycles.

Furthermore, all non-special vertices have degree at least three. Then -- using the handshaking lemma and the fact that the forest $G'$ has $n'-c'$ edges, where $c'$ denotes the number of connected components in $G'$ -- we obtain:
\begin{equation} \label{eq:handshake}
2|E'| = \sum_{i=1}^{n'} \deg(v_i) \geq 3(n'-3) + \sum_{i=1}^{3} \deg(v_i) \, .
\end{equation}

Observe that $c' \leq 3$ holds, since all connected components not containing a special vertex are completely removed, which follows inductively from the fact that any tree has a leaf. Therefore, it suffices to consider the following three cases:

\begin{description}
    \item[Case 1:] $c' = 1$. This implies $\sum_{i=1}^{3} \deg(v_i) \geq 3$, and hence (using Equation~\ref{eq:handshake}): 
    \[ 2(n' - 1) \geq 3n' - 6. \]
    \item[Case 2:] $c' = 2$. This implies $\sum_{i=1}^{3} \deg(v_i) \geq 2$, and hence (using Equation~\ref{eq:handshake}): 
    \[
    2(n' - 2) \geq 3n' - 7.
    \]
    
    \item[Case 3:] $c' = 3$. This implies $\sum_{i=1}^{3} \deg(v_i) \geq 0$, and hence (using Equation~\ref{eq:handshake}):
    \[
    2(n' - 3) \geq 3n' - 9.
    \]
\end{description}
The inequality in case 1 is only satisfied for $n' \leq 4$ and in case 2 and 3 only for $n' \leq 3$.

\end{proof}

Now, we are prepared to prove our main lemma.

\begin{lemma} \label{lem:tree_stabbing_general}
For any integer $n \geq 9$, there exist (planar) point sets $P'_1 \subsetneq P'_2$ of size $|P'_1| = n$ and $|P'_2| = n+1$ such that \treestab{$P'_1$} $>$ \treestab{$P'_2$}.
\end{lemma}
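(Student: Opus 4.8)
The plan is to build on the $9$-point construction and its generalization to $n$ points, using Lemma~\ref{lem:treestabbing_special} together with Lemma~\ref{lem:contraction} to transfer the lower bound on \treestab{$P_1$} to the large point set $P_1'$. For the upper bound on \treestab{$P_2'$}, I would exhibit an explicit spanning tree of $P_2'$ with stabbing number $3$, obtained by taking the stabbing-$3$ tree of $P_2$ guaranteed by Lemma~\ref{lem:treestabbing_special} and routing the convex chain $C$ as a path hanging off $y_1$. Since a convex chain added as a path can only be crossed a bounded number of times by any line (a line meets a convex polygonal path in at most two points, contributing at most $2$ crossings), and property~3 of the construction ensures the chain $C_1$ is flat enough that these crossings interact with the rest of the tree in a controlled way, the stabbing number should stay at $3$; this is essentially the ``easy'' direction, since adding points along a convex chain as a path is benign for stabbing.

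\medskip

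The hard part is the lower bound $\treestab{P_1'} \ge 4$. Here the strategy is a contradiction argument: suppose $T$ is a spanning tree of $P_1'$ with stabbing number $\le 3$. I would then designate the three ``outer'' vertices $z_2, z_3$ and one endpoint of the chain $C$ as special vertices $v_1, v_2, v_3$ (more precisely, the three tips of the three arms), and apply Lemma~\ref{lem:contraction} to contract $T$ down. By Lemma~\ref{lem:contraction}, after repeatedly removing degree-$1$ vertices and smoothing degree-$2$ vertices, the resulting forest has the three special vertices plus at most one additional non-special vertex. By Lemma~\ref{lem:path}, none of these contraction operations increases the stabbing number, so the contracted graph still has stabbing number $\le 3$. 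Crucially, property~2 of the construction guarantees that the points of $C \cup \{y_1\}$ are so close together and so placed that removing all but (the right) two of them preserves the order type of the $9$-point set $P_1$; thus the contracted graph is realized as a spanning tree (or forest) on a point set combinatorially identical to $P_1$.

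\medskip

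I would then argue that this contracted object yields a spanning structure on a copy of $P_1$ with stabbing number $\le 3$, contradicting $\treestab{P_1} = 4$ from Lemma~\ref{lem:treestabbing_special}. The delicate point to handle carefully is that Lemma~\ref{lem:contraction} produces a \emph{forest} possibly with fewer edges than a spanning tree, and possibly with one extra non-special vertex, so the contracted graph need not literally be a spanning tree of the $9$-point set. I expect this to be the main obstacle: I must verify that the contracted forest, when interpreted on the order-type copy of $P_1$, can be completed to (or already witnesses) a spanning tree of $P_1$ whose stabbing number is still $\le 3$, or else directly derive the contradiction from the forest itself. The role of property~3 is to ensure that the segments introduced by smoothing degree-$2$ vertices along the arm $C_1$ are all crossed by a common line together with $\overline{y_3z_3}$, which is what forces the stabbing number up to $4$ in the original set and hence propagates the lower bound. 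Assembling these pieces — the upper bound via the explicit path-augmented tree, and the lower bound via contraction plus order-type preservation plus the brute-force fact $\treestab{P_1}=4$ — completes the proof for every $n \ge 9$.
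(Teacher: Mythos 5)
Your upper bound matches the paper's (exhibit the explicit tree of Figure~\ref{fig:treestab_counter}~(b)); your convexity remark is looser than the paper's case analysis, but it is the same idea. The lower bound, however, has a genuine gap: you apply Lemma~\ref{lem:contraction} to the \emph{whole} tree $T$, taking the three arm tips ($z_2$, $z_3$, and an endpoint of $C$) as special vertices. That contraction terminates with at most \emph{four} vertices in total, so the contracted graph lives on a point set of size at most four --- it cannot be \enquote{combinatorially identical to $P_1$}, which has nine points, and no contradiction with $\treestab{P_1} = 4$ can be extracted from it. Worse, this global contraction would remove or smooth vertices such as $x_1, x_2, y_2, x_3, y_3$, whose deletion is not covered by property~2: that property guarantees order-type preservation only when points of $C \cup \{y_1\}$ are deleted. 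So the contracted configuration has no controlled order type at all.

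The paper avoids both problems by localizing the contraction. It considers only the subforest of $T$ induced by $C_1 = x_1y_1C$, first bounds the number of \emph{bridges} (edges of $T$ with exactly one endpoint in $C_1$) by three --- otherwise a line separating $C_1$ from the rest already has four crossings --- and then chooses the special vertices \emph{inside} $C_1$, namely $x_1$ together with the bridge endpoints. With that choice every removed or smoothed vertex lies in $C \cup \{y_1\}$, so property~2 applies and the surviving points form an order-type copy of the nine-point set; and since non-special vertices of $C_1$ are incident to no bridge, their degree in $T$ equals their degree in the induced subforest, so each contraction step is a legitimate operation of Lemma~\ref{lem:path} on $T$ itself. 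In Case~1 one obtains a genuine spanning tree of the nine points with stabbing number at most $3$, contradicting Lemma~\ref{lem:treestabbing_special}; this is why the \enquote{forest, not spanning tree} worry you flag never actually arises. The leftover-vertex case (your \enquote{main obstacle}) is not repaired by completing a forest either: the paper kills it directly, using property~3 to produce a line separating the surviving non-special vertex $v$ from $v_1, v_2, v_3$ and simultaneously $z_3$ from everything else; the three edges at $v$ plus the edge of $T$ at $z_3$ (which cannot be $\overline{vz_3}$, since $v$ lies on no bridge) give four crossings. Without the bridge argument and the localized choice of special vertices, your outline cannot be completed.
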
 

\begin{proof}
Let $k = n - 8$ and define $P'_1$ and $P'_2$ as above (Figure \ref{fig:treestab_counter} (b)), replacing $z_1$ by $p_1, \dots, p_k$:
\[
P'_1 = \{ x_1, y_1, p_1, \dots, p_k, x_2, y_2, z_2, x_3, y_3, z_3 \}, \quad \qquad P'_2 = P'_1 \cup \{ O \}.
\]

On the one hand, it is straightforward to see that the spanning tree depicted in Figure~\ref{fig:treestab_counter}~(b) has stabbing number 3 (see Figure~\ref{fig:case_distinction} for an illustration) and hence \treestab{$P'_2$} $\leq 3$.

On the other hand, we show \treestab{$P'_1$} $\geq 4$ next. Assume for the sake of contradiction that there is a spanning tree $T$ of $P'_1$ with stabbing number at most 3. Our goal will be to carefully remove points from $P_1$ such that the stabbing number of $T$ cannot increase until there are only 9 points left in exactly the same relative position as in Lemma~\ref{lem:treestabbing_special}. Clearly, this would be a contradiction.

\begin{figure}
    \centering
    \includegraphics[width=\textwidth, page=5]{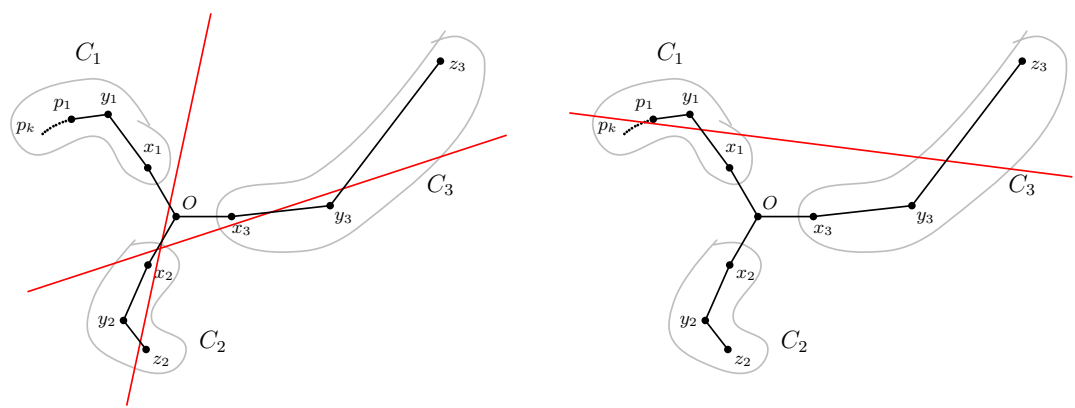}
    \caption{There is no line that intersects more than 3 segments in this spanning tree.}
    \label{fig:case_distinction}
\end{figure}

Consider the set of edges of $T$ with at least one endpoint among the points in $C_1$. There are at most 3 edges having only one endpoint in $C_1$ (we call them \emph{bridges}). If there would be more than 3 bridges, there is a line that intersects at least 4 line segments, namely a line that separates $C_1$ from the rest. Because of the same reason, not all three bridges can go to the same other component ($C_2$ or $C_3$). 

There are at most 3 points in $C_1$ that are incident to a bridge and if they are distinct, one of them needs to be $x_1$, otherwise the line separating $x_1$ from the rest of $C_1$ has 4 intersections. Pick three vertices $v_1, v_2, v_3$ in $C_1$ such that $x_1$ and any point incident to a bridge is among them and mark them as special.

Next, we apply Lemma~\ref{lem:contraction} to the subforest induced by $C_1$:

\subparagraph*{Case 1:} No non-special vertex in $C_1$ survives the removal process.
    
Then 9 points with the same order type as in Lemma~\ref{lem:treestabbing_special} and a spanning tree with stabbing number 3 remain, which is a contradiction to Lemma~\ref{lem:treestabbing_special}.

\subparagraph*{Case 2:} One non-special vertex $v$ in $C_1$ survives the removal process.
    
Then $v$ is incident to all special vertices $v_1, v_2, v_3$. If $v$ is the last vertex along $C_1$, there is obviously a line having more than three intersections. Otherwise, by construction, there is a line $\ell$ that separates $v$ from $v_1, v_2, v_3$ and at the same time $z_3$ from the rest of the point set (see Figure~\ref{fig:monotone_contraction}). In particular, $\ell$ has only $z_3$ and $v$ on one side and all other points on the other. $z_3$ cannot be adjacent to $v$, since $v$ is not incident to a bridge and therefore contributes another intersection to $\ell$. This is a contradiction to the assumption that $T$ was a spanning tree of stabbing number 3.
\end{proof}

\begin{figure}
    \centering
    \includegraphics[width=\textwidth, page=3]{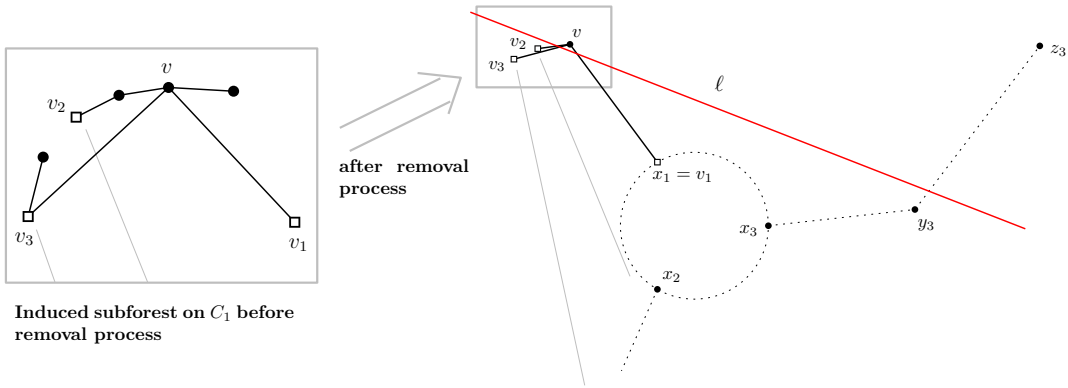}
    \caption{Illustration of Case 2. If a non-special vertex $v$ survives the removal process, the red line has too many intersections.}
    \label{fig:monotone_contraction}
\end{figure}

\begin{corollary} \label{cor:treestab}
\treestab{\placeholder} is not monotone.
\end{corollary}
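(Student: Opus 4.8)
The plan is to derive Corollary~\ref{cor:treestab} directly from Lemma~\ref{lem:tree_stabbing_general}, which already does all the substantive work. The corollary asserts that \treestab{\placeholder} is \emph{not} monotone, meaning there exist point sets $P \subsetneq P'$ with $\treestab{P} > \treestab{P'}$. Since monotonicity would require $\treestab{P'} \geq \treestab{P}$ for every proper subset $P \subsetneq P'$, exhibiting a single counterexample pair suffices to refute it.

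First I would invoke Lemma~\ref{lem:tree_stabbing_general} with any fixed admissible parameter, say $n = 9$. The lemma guarantees point sets $P'_1 \subsetneq P'_2$ with $|P'_1| = n$ and $|P'_2| = n+1$ satisfying $\treestab{P'_1} > \treestab{P'_2}$. Setting $P = P'_1$ and $P' = P'_2$, we have $P \subsetneq P'$ together with $\treestab{P} > \treestab{P'}$, which is precisely a violation of monotonicity. So the corollary follows immediately.

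There is essentially no obstacle here, since the hard part has been absorbed into the earlier lemmas: Lemma~\ref{lem:treestabbing_special} establishes the base case $\treestab{P_1} = 4 > 3 \geq \treestab{P_2}$ via computer search, Lemma~\ref{lem:contraction} provides the structural contraction tool, and Lemma~\ref{lem:tree_stabbing_general} combines these to produce counterexamples for all $n \geq 9$. The only thing I must be careful about is phrasing the corollary's statement as the negation of monotonicity correctly: monotonicity in the sense of Eppstein means the parameter does not decrease when passing to a superset, and exhibiting a strict decrease refutes this. I would write the proof as a one-line deduction, possibly noting that the smallest explicit witness is the nine-versus-ten point configuration of Lemma~\ref{lem:treestabbing_special}.

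\begin{proof}
By Lemma~\ref{lem:tree_stabbing_general} (applied, for instance, with $n = 9$), there exist point sets $P'_1 \subsetneq P'_2$ with $\treestab{P'_1} > \treestab{P'_2}$. This directly contradicts the definition of monotonicity, which would require $\treestab{P} \le \treestab{P'}$ whenever $P \subseteq P'$. Hence \treestab{\placeholder} is not monotone.
\end{proof}
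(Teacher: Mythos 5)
Your proof is correct and matches the paper's approach exactly: the corollary is an immediate consequence of Lemma~\ref{lem:tree_stabbing_general}, which supplies point sets $P'_1 \subsetneq P'_2$ with $\treestab{P'_1} > \treestab{P'_2}$, directly negating monotonicity. The paper itself states the corollary without further argument for precisely this reason, so your one-line deduction is all that is needed.
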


\section{Triangulation Stabbing Number} \label{sec:tristab}

\begin{figure}
    \centering
    \includegraphics[width=0.75\textwidth, page=1]{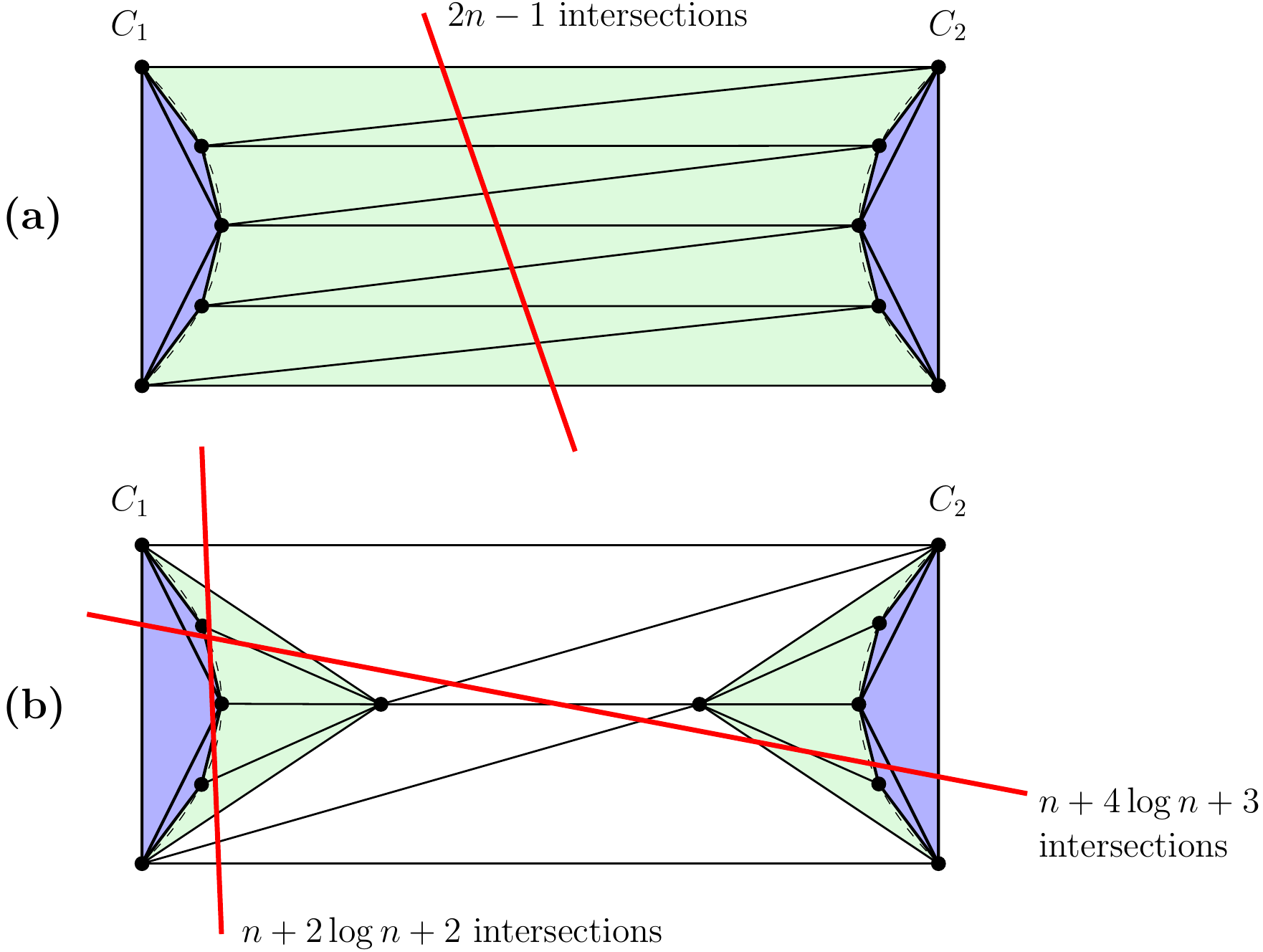}
    \caption{Two symmetric chains in (a) might have a larger triangulation stabbing number compared to the same point set with additional points inbetween (b).}
    \label{fig:tri_stab3}
\end{figure}

We denote the triangulation stabbing number by \tristab{\placeholder}. Proving non-monotonicity of \tristab{\placeholder} is much simpler, only exploiting the additional structure enforced by triangulations. Consider two symmetric convex chains $C_1 = \{p_1, \ldots, p_n\}$ and $C_2 = \{p_1', \ldots, p_n'\}$ (sufficiently flat) each consisting of $n$ points and facing each other as depicted in Figure~\ref{fig:tri_stab3}~(a). These points constitute the point set $P$. $P'$ consists of the same $2n$ points and two more (slightly perturbed) points added on the line segment connecting the two middle points of $C_1$ and $C_2$ (as in Figure~\ref{fig:tri_stab3}~(b)). Then the following holds:

\begin{lemma} \label{lem:tri_stab}
\tristab{P} $\geq 2n - 1$ and \tristab{P'} $\leq n + 4\log n + 3$.
\end{lemma}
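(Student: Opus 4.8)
The plan is to prove the two bounds separately, since they are essentially independent claims about two different point sets.

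\medskip

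\textbf{Lower bound \tristab{P} $\geq 2n-1$.} The key observation is that any triangulation of $P$ must connect the two convex chains $C_1$ and $C_2$, and because the chains face each other, a line passing between them will be forced to cross many edges. First I would note that the two chains together bound a region (the ``channel'' between them), and any triangulation must fill this channel with triangles whose edges connect $C_1$ to $C_2$. I would argue that there exists a horizontal line $\ell$ separating $C_1$ from $C_2$ such that every triangulation edge crossing the channel must intersect $\ell$. The core of the argument is a counting claim: any triangulation of the channel between two opposing chains of $n$ points each contains at least $2n-1$ edges crossing from one side to the other, or alternatively that some single line crosses at least $2n-1$ edges. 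One clean way to see this: the channel is triangulated into triangles, and since the $2n$ chain vertices all lie on the boundary of this channel region, a triangulation of a convex-position-like region on $2n$ vertices has roughly $2n$ triangles, each contributing a crossing edge with $\ell$; I would make this precise by a direct combinatorial count of how many edges a line through the middle of the channel must stab.

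\medskip

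\textbf{Upper bound \tristab{P'} $\leq n + 4\log n + 3$.} Here I would explicitly construct a good triangulation of $P'$ exploiting the two extra points placed between the middle of the two chains. The intuition is that the two added interior points ``break'' the long channel into pieces and allow a balanced, divide-and-conquer-style triangulation rather than the forced fan-like structure of part~(a). The $\log n$ term strongly suggests a recursive halving construction: I would recursively split each chain into halves, triangulate using the interior points as hubs, and analyze how a worst-case line interacts with the recursion. First I would fix a triangulation $\mathcal{T}$ of $P'$ built so that any line crosses at most one ``long'' channel edge per recursive level (giving the $\log n$ contribution) plus at most a linear number of ``local'' chain edges (giving the $n$ term). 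Then, to bound $\stab{\mathcal{T}}$, I would take an arbitrary line $\ell$ and split its intersections into those with edges along the two convex chains (at most $n$ of these total, since a line crosses a convex chain of $m$ edges in at most a bounded number of points and the chains are sufficiently flat) and those with the ``crossing'' edges of the recursive structure (at most $O(\log n)$ by the balanced decomposition, with the precise constant $4$ coming from a careful per-level accounting). Summing these contributions with the constant slack yields the claimed $n + 4\log n + 3$.

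\medskip

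\textbf{Main obstacle.} The hard part will be the upper bound, specifically designing the recursive triangulation and proving that no line can stab more than $O(\log n)$ of its crossing edges. The constant $4$ and the additive $3$ indicate a delicate level-by-level charging argument, where one must ensure that at each of the $\log n$ recursion levels a given line is charged only a bounded number of times, while simultaneously guaranteeing the chain edges contribute at most $n$. Getting the interaction between $\ell$ and the recursion right, and verifying that the two perturbed interior points suffice to enable the halving at every level without creating a line that cuts through many crossing edges at once, is where the real care is needed. By contrast, the lower bound should follow from a comparatively direct structural argument about triangulating the channel between two opposing chains.
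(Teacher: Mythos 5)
Your lower bound argument is essentially the paper's: count the edges of the channel between the two chains (since the chains bulge toward each other, every diagonal of the $2n$-vertex channel polygon joins $C_1$ to $C_2$, giving $2n-3$ diagonals plus the two side hull edges, i.e.\ $2n-1$ crossing edges), and observe that one line through the flat channel stabs them all. That part is fine, modulo making the count precise.

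The upper bound, however, has a genuine gap: your decomposition is exactly backwards, and the construction you propose cannot exist. You attribute the $n$ term to ``edges along the two convex chains'' and the $\log n$ term to the channel-crossing edges. But a line meets the boundary edges of a convex chain at most twice per chain, so chain edges can only ever contribute $O(1)$ — by your own remark. Conversely, the crossing edges cannot be reduced to $O(\log n)$: in \emph{any} triangulation of $P'$, each chain edge $\overline{p_ip_{i+1}}$ of $C_1$ must be completed by a triangle lying on the channel side, and since chords of the flat chain lie on its outer side, the apex of that triangle belongs to $P' \setminus C_1$. This forces at least $n$ edges joining $C_1$ to the rest of the point set, and a single line separating $C_1$ from the remaining points crosses all of them. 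So $\tristab{P'} \geq n$ no matter how cleverly you recurse, and the $n$ term necessarily lives on the crossing edges, not on the chains. What the paper actually does is: fan each added interior point to its nearby chain (two ``green'' fans of $n$ edges each), triangulate the convex regions behind the chains in a balanced way (contributing $4\log n$), and leave a constant number of edges in between. The crucial missing idea in your sketch is the geometric lemma that makes the total green contribution $n$ rather than $2n$: any line that stabs fan edges to both the upper half and the lower half of one chain is confined to a narrow double wedge through that chain's hub, and this wedge avoids the other fan entirely — which is why the points $p_{n/2-1}$ and $p_{n/2+1}$ must be placed far from $p_{n/2}$. Without this interplay between the two fans you only get a bound of roughly $2n + O(\log n)$, which does not beat the lower bound $2n-1$ for $P$, and the lemma fails.
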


\begin{proof}

Any triangulation of $P$ must have $2n-1$ segments connecting a point from $C_1$ with a point from $C_2$ (the green area in Figure~\ref{fig:tri_stab3} (a)). Hence, 
\[
\tristab{P} \geq 2n - 1.
\]
On the other hand, the triangulation of $P'$ depicted in Figure~\ref{fig:tri_stab3} (b) has stabbing number $n + 4\log n + 3$, which can be seen as follows. The two green areas contain $n$ segments each and are constructed in such a way that any line $\ell$ may intersect at most $n$ segments from both green areas. For this, the two points $p_{n/2 - 1}$ and $p_{n/2 + 1}$ need to be sufficiently far from $p_{n/2}$. Figure~\ref{fig:tri_stab4} illustrates the area that contains all lines which intersect line segments of \enquote{upper} ($p_1, \ldots, p_{n/2}$) and \enquote{lower} ($p_{n/2}, \ldots, p_{n}$) half of the convex chain in the green region.

\begin{figure}
    \centering
    \includegraphics[width=0.65\textwidth, page=2]{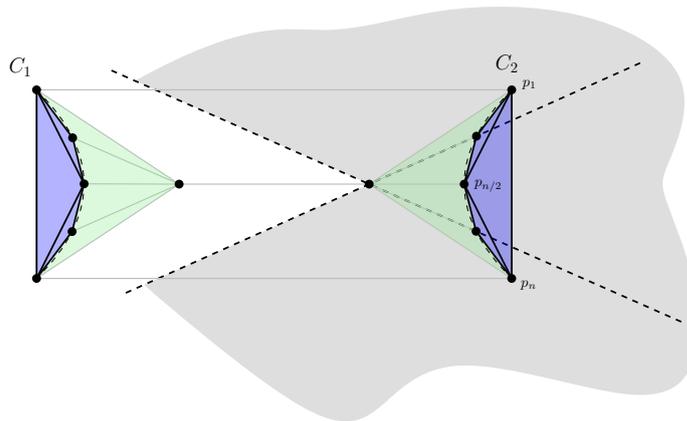}
    \caption{Any line that simultaneously intersects a segment connected to the \enquote{upper} half of the convex chain $C_2$ and a segment connected to the \enquote{lower} half of $C_2$ (both in the green region) must be fully contained in the shaded area and hence, cannot intersect any green segment of $C_1$ (and also the other way around).}
    \label{fig:tri_stab4}
\end{figure}

Since $C_1$ and $C_2$ are convex, it is clear that $\ell$ may accumulate $4\log n$ more intersections in the blue areas. The white area contains only a constant number of segments, in total $\ell$ has at most $n + 4\log n + 3$ intersections. Hence,
\[
\tristab{P'} \leq n + 4\log n + 3.
\]
\end{proof}

\begin{corollary} \label{cor:tristab}
\tristab{\placeholder} is not monotone.
\end{corollary}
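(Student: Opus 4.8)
The plan is to derive the corollary directly from Lemma~\ref{lem:tri_stab}, since that lemma already supplies a concrete pair of point sets together with matching lower and upper bounds on their triangulation stabbing numbers, and non-monotonicity requires nothing more than a single such witnessing pair. First I would recall that, by the definition of monotonicity, it suffices to exhibit point sets $P \subsetneq P'$ with \tristab{$P$} $>$ \tristab{$P'$}. The sets $P$ and $P'$ constructed for Lemma~\ref{lem:tri_stab} are exactly such a candidate: $P'$ is obtained from $P$ by inserting two additional points on the segment joining the middle points of the two chains, so the strict containment $P \subsetneq P'$ holds by construction.

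Next I would combine the two bounds furnished by Lemma~\ref{lem:tri_stab}, namely \tristab{$P$} $\geq 2n - 1$ and \tristab{$P'$} $\leq n + 4\log n + 3$. It then only remains to verify that the lower bound for $P$ strictly exceeds the upper bound for $P'$ once $n$ is large enough. This reduces to the elementary inequality $2n - 1 > n + 4\log n + 3$, equivalently $n > 4\log n + 4$, which holds for all sufficiently large $n$ because the linear term dominates the logarithmic one (the base of the logarithm being immaterial, as any constant factor is absorbed by the linear gap). Fixing any such $n$ (of the parity required by the construction, so that the chain endpoints $p_{n/2}$ are well defined) then yields \tristab{$P$} $>$ \tristab{$P'$}, which is precisely the assertion of the corollary.

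I do not expect any genuine obstacle at this stage: all the combinatorial and geometric content has already been discharged in Lemma~\ref{lem:tri_stab}, and the corollary is a one-line asymptotic comparison of the two established bounds. The only points demanding care are confirming the strict containment $P \subsetneq P'$ and stating the threshold on $n$ unambiguously; I would therefore either quote a concrete value of $n$ or simply invoke ``sufficiently large $n$'' so that the final strict inequality is beyond dispute.
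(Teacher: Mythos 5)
Your proposal is correct and follows exactly the route the paper intends: the corollary is an immediate consequence of Lemma~\ref{lem:tri_stab}, since $P \subsetneq P'$ by construction and $2n-1 > n + 4\log n + 3$ for sufficiently large $n$. Your added care about the threshold on $n$ and the parity requirement is fine but does not change the argument.
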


\section{Matching Stabbing Number} \label{sec:matstab}

First note that the point sets in the case of matchings have to be of even size and all matchings are perfect.

Take $k$ points $p_1, \dots, p_k$ in convex position and one point $x$ inside such that any segment $\overline{xp_i}$ is intersected by some $\overline{p_jp_k}$. Next, double all points within a small enough $\varepsilon$-radius (preserving general position) and for a point $p$ name the partner point $p'$ (see Figure~\ref{fig:mat_stab}). 

Define the point sets $P_1$ and $P_2$ to be:
\[
P_2 = \{ x, x', p_1, \dots, p_k, p'_1, \dots, p'_k, \}, \quad \qquad
P_1 = P_2 \setminus \{ x', p'_1 \}.
\]

\begin{lemma} \label{lem:mat_stab}
It holds that \matstab{$P_1$} $\geq 3$ and \matstab{$P_2$} $\leq 2$.
\end{lemma}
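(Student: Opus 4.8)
The plan is to treat the two bounds separately, using the \emph{twin matching} for the upper bound and an orphaned-point argument for the lower bound.

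For \matstab{$P_2$} $\leq 2$, I would exhibit the matching that pairs every point with its partner, i.e.\ the edges $\overline{xx'}$ and $\overline{p_ip_i'}$ for $i=1,\dots,k$. Each such edge lies inside an $\varepsilon$-disk around one of the original locations $x,p_1,\dots,p_k$. Since these $k+1$ locations are in general position, no three are collinear, so the minimum ``width'' over all triples is a positive constant that does not depend on $\varepsilon$; hence for $\varepsilon$ small enough no line can pass within distance $\varepsilon$ of three of them at once. Consequently any line meets at most two of the tiny partner-segments, which gives stabbing number at most $2$ and therefore \matstab{$P_2$} $\le 2$.

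For \matstab{$P_1$} $\ge 3$ I would argue by contradiction, assuming a perfect matching $M$ of $P_1$ with stabbing number at most $2$. The point of deleting both $x'$ and $p_1'$ is twofold: it keeps $|P_1|=2k$ even, so a perfect matching exists, and it orphans $x$ (and $p_1$), forcing $x$ to be matched across to a foreign cluster. Thus $x$ is matched to some point $a$ lying in the $\varepsilon$-disk of a hull vertex $p_i$, so the edge $\overline{xa}$ is, up to $O(\varepsilon)$, the spoke $\overline{xp_i}$. By the defining property of the construction this spoke is crossed by some chord $\overline{p_jp_l}$, which for the central placement of $x$ we may take to be the neighbour chord $\overline{p_{i-1}p_{i+1}}$; let $c$ be its supporting line. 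Crucially $c$ separates $x$ from $p_i$, so $\overline{xa}$ crosses $c$ transversally at an \emph{interior} point, far from every cluster, and this crossing is stable under an $O(\varepsilon)$ perturbation of $c$. I would then show that a suitably perturbed line $\ell$ near $c$ also meets a matching edge at each of the clusters $p_{i-1}$ and $p_{i+1}$: if such a cluster is matched internally, its short twin-edge is present and I thread $\ell$ between the two twin points to cross it; since $\ell$ carries two degrees of freedom, it can be threaded near both ends of the chord simultaneously while retaining the interior spoke crossing. This produces three crossings, contradicting stabbing number $\le 2$.

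The main obstacle is the complementary case in which a neighbouring cluster is \emph{split}, i.e.\ its two points are matched outward by long edges rather than to each other, so there is no short segment to thread and the long edges emanate from the cluster instead of crossing $\ell$. Here I expect to use the precise geometry: because the chord $\overline{p_{i-1}p_{i+1}}$ cuts off an ear of the convex polygon containing only $p_i$, the $p_i$-side of $c$ contains essentially just the cluster at $p_i$, and counting the edges of $M$ that cross the cut $c$ — combined with the ``deep interior'' placement of $x$ (which supplies the crossing chord) and the freedom to shrink $\varepsilon$ after fixing the point set — should pin the matching down enough to force the third crossing. Making this split-cluster case fully rigorous, namely threading $\ell$ near both chord endpoints at once while preserving the transversal spoke crossing, is the delicate part of the argument; in particular the doubling radius $\varepsilon$ must be chosen last, small enough that every such perturbation stays within the tolerance dictated by the order type.
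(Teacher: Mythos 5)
Your upper bound is correct and is exactly the paper's: the twin matching has stabbing number $2$ once $\varepsilon$ is small enough, and your observation that no line can pass $\varepsilon$-close to three of the sites $x,p_1,\dots,p_k$ is a legitimate way to make the paper's one-line claim precise. Your lower bound also begins exactly as the paper's proof does: $x$ must be matched into the cluster of some $p_i$, the resulting edge is (up to $\varepsilon$) the spoke $\overline{xp_i}$, that spoke is crossed by a chord $\overline{p_jp_k}$, and one does a case analysis on how the clusters at $p_j$ and $p_k$ are matched. Your case in which both of these clusters are matched internally is the paper's Case~1 and your threading argument there is fine.

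The split-cluster case that you leave open is, however, a genuine gap and not a technicality: it comprises the paper's Cases~2 and~3, which the paper resolves by exhibiting additional witness lines (Figure~\ref{fig:mat_stab}~(b),(c)), and the specific plan you sketch---lines staying near the chord line $c$, threaded near its endpoints, retaining the spoke crossing---provably cannot close it. Concretely (say WLOG $x$ is matched to $p_i$ and the chord is your $\overline{p_{i-1}p_{i+1}}$, written $\overline{p_jp_k}$): let $M$ contain $\overline{xp_i}$, $\overline{p'_ip_j}$, the twin edge $\overline{p_kp'_k}$, and match $p'_j$ to a far-away point, say $p_1$ (which needs a non-twin partner anyway), with all remaining clusters twin-matched. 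A line crosses a segment iff its endpoints lie on opposite sides, so any line within $O(\varepsilon)$ of $c$ that keeps the three nearby clusters intact crosses \emph{exactly two} edges of $M$, in each of the four possible placements: ear side $\{p_i,p'_i\}$ gives $\overline{xp_i},\overline{p'_ip_j}$; ear side $\{p_i,p'_i,p_j,p'_j\}$ gives $\overline{xp_i},\overline{p'_jp_1}$; ear side $\{p_i,p'_i,p_k,p'_k\}$ gives $\overline{xp_i},\overline{p'_ip_j}$; and all three clusters on the ear side gives $\overline{xp_i},\overline{p'_jp_1}$. (A parity remark explains why: such lines have an even number of points on the ear side, hence an even number of crossings, and here that number is $2$, not $4$.) Threading does not rescue the plan: a thread through the gap at $p_j$ yields three crossings only if it captures $p'_j$ (not $p_j$) together with the $p_i$-cluster, and whether that is realizable depends on the orientation of the twin pair at $p_j$, which neither construction fixes---and since the matching is adversarial and chosen after the point set, it can always route the far edge out of whichever twin your lines cannot capture. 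In this configuration the third crossing must be found by a line of a different type, e.g.\ one threading the gap of the \emph{internally matched} cluster at $p_k$ near the polygon edge $\overline{p_ip_k}$, separating $\{p_i,p'_i\}$ plus one point of $\{p_k,p'_k\}$ from everything else; it crosses $\overline{xp_i}$, $\overline{p'_ip_j}$ and $\overline{p_kp'_k}$, for either twin orientation. Supplying such witness lines for all split configurations is precisely what the paper's Cases~2 and~3 do and what your argument is missing.
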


\begin{proof}
Clearly, the perfect matching that assigns an edge to all partner points has stabbing number 2 and hence \matstab{$P_2$} $\leq 2$.

On the other hand, we show \matstab{$P_1$} $\geq 3$, which can be seen as follows. Let $M$ be a perfect matching in $P_1$ and consider the point $p_i$ (or $p'_i$) that is connected to $x$. Also consider the points $p_j$ and $p_k$ such that $\overline{xp_i}$ and $\overline{p_jp_k}$ intersect. At least one of the points $p'_j$ and $p'_k$ is present in $P_1$. 

\begin{description}
    \item[Case 1:] If $\{p_j, p'_j \}$ and $\{p_k, p'_k \}$ are both part of $M$, there is a line intersecting the three segments $\overline{xp_i}$, $\overline{p_jp'_j}$ and $\overline{p_kp'_k}$. 
    
    \item[Case 2:] If $\{p_j, p'_j \}$ and $\{p_k, p'_k \}$ are both not part of $M$, then one of the four lines depicted in Figure \ref{fig:mat_stab}~(b) has three intersections.
    
    \item[Case 3:] If exactly one of the edges $\{p_j, p'_j \}$ or $\{p_k, p'_k \}$ is part of $M$, then one of the two lines depicted in Figure \ref{fig:mat_stab}~(c) has three intersections.
\end{description}
\end{proof}

\begin{corollary} \label{cor:matstab}
The matching stabbing number, \matstab{\placeholder}, is not monotone.
\end{corollary}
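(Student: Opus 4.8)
The plan is to derive the corollary directly from Lemma~\ref{lem:mat_stab}, which already carries all of the weight. Recall that \matstab{\placeholder} is called \emph{monotone} if, for every pair of point sets $P' \subsetneq P$ (both of even cardinality, so that the matching stabbing number is well defined), one has \matstab{$P'$} $\leq$ \matstab{$P$}; that is, adding points never decreases the stabbing number. To refute this property it suffices to exhibit a single pair $P' \subsetneq P$ violating the inequality, and the construction preceding Lemma~\ref{lem:mat_stab} provides exactly such a witness.

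First I would verify the parity constraint, since the matching stabbing number is only defined for point sets of even size and only perfect matchings are considered. Writing $P_2 = \{x, x', p_1, \dots, p_k, p'_1, \dots, p'_k\}$ we have $|P_2| = 2k+2$, and deleting the two points $x'$ and $p'_1$ yields $|P_1| = 2k$, so both sets have even cardinality and each admits a perfect matching. Moreover $P_1 \subsetneq P_2$ holds by construction, so the pair $(P_1, P_2)$ is a legitimate instance for testing monotonicity.

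Then I would simply invoke Lemma~\ref{lem:mat_stab}, which guarantees \matstab{$P_1$} $\geq 3$ and \matstab{$P_2$} $\leq 2$. Chaining these bounds gives \matstab{$P_1$} $\geq 3 > 2 \geq$ \matstab{$P_2$}, so the strictly smaller point set $P_1$ has the strictly larger matching stabbing number. This directly contradicts the inequality \matstab{$P_1$} $\leq$ \matstab{$P_2$} required by monotonicity, and hence \matstab{\placeholder} is not monotone.

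There is essentially no obstacle at this stage: the entire difficulty is concentrated in Lemma~\ref{lem:mat_stab}, in particular its three-case analysis establishing \matstab{$P_1$} $\geq 3$, which we are permitted to assume. The only subtlety worth flagging is the even-size requirement, which is why the construction doubles the points and removes them in pairs; we have checked above that both $P_1$ and $P_2$ satisfy it, so the argument goes through without further work.
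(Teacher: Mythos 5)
Your proposal is correct and matches the paper's (implicit) argument exactly: the corollary follows immediately from Lemma~\ref{lem:mat_stab}, since $P_1 \subsetneq P_2$ with \matstab{$P_1$} $\geq 3 > 2 \geq$ \matstab{$P_2$} is precisely the counterexample to monotonicity. Your additional check that both sets have even cardinality is a sensible (if routine) verification that the instance is well defined.
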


\begin{figure}
    \centering
    \includegraphics[width=0.9\textwidth]{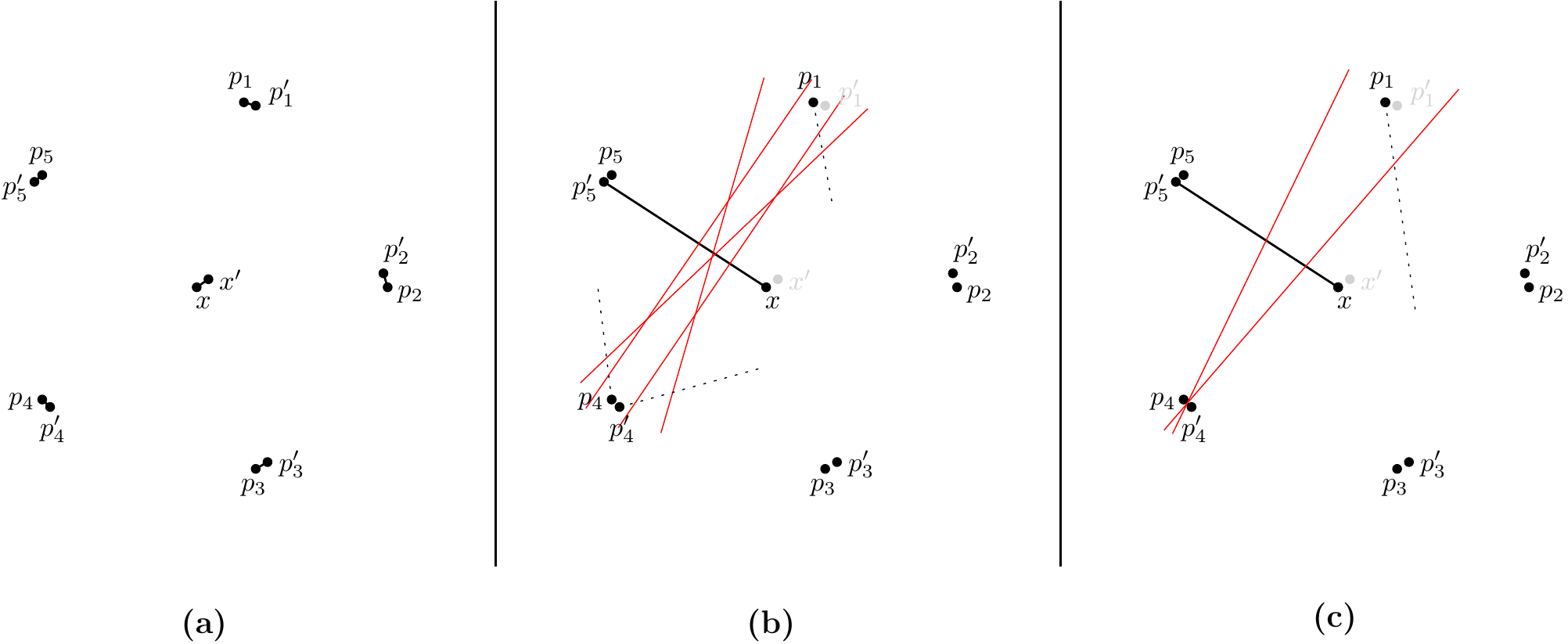}
    \caption{A point set with matching stabbing number 2 in (a) and removing $p_1$ and $x'$ results in a point set with larger matching stabbing number, illustrated in (b) and (c).}
    \label{fig:mat_stab}
\end{figure}

\section{Conclusion}

Our proof of Lemma~\ref{lem:treestabbing_special} relies on computer assistance and of course it would be interesting to turn this into a pen-and-paper proof. 

Furthermore, it is easy to generalize stabbing numbers to the context of range spaces $(X, \mathcal{R})$, where $X$ is a set and $\mathcal{R}$ a set of subsets of $X$, called \emph{ranges}. A spanning path then corresponds to a permutation of $X$ and a set $A \subseteq X$ is \emph{stabbed} by a range $r \in \mathcal{R}$ if there are $x, y \in A$ such that $x \in r$ and $y \notin r$. It is straightforward to prove Corollary \ref{cor:path_stab} in this context, but we don't know how to apply this for other graph classes. 

\bibliography{eurocg20_example}

\appendix
\section{Appendix}

\begin{lemma} \label{lem:repr_lines}
For any set $P \subseteq \R^2$ of $n$ points in general position, a set of representative lines contains exactly $\binom{n}{2}$ lines.
\end{lemma}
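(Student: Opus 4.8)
The plan is to prove the statement by an exact count: I will show that the number of distinct partitions of $P$ into two nonempty subsets that are induced by some line equals $\binom{n}{2}$. Since a representative set contains exactly one line for each such partition (and the partition induced by any line is one of these), this count is precisely its size. I would dispose of $n=2$ as a trivial base case (one pair, one separation) and assume $n \geq 3$ throughout.

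First I would identify separations with pairs of disjoint convex hulls: a partition $\{A,B\}$ into nonempty parts is induced by a line if and only if $\conv(A)$ and $\conv(B)$ are disjoint, i.e.\ $A$ and $B$ are strictly linearly separable. To each such separation I associate its two \emph{inner common tangents}, the two lines tangent to both $\conv(A)$ and $\conv(B)$ that keep the two hulls on opposite sides. Two disjoint compact convex sets have exactly two inner common tangents, and by general position each such tangent passes through exactly one vertex of $\conv(A)$ (a point of $A$) and exactly one vertex of $\conv(B)$ (a point of $B$). Thus every separation determines two \emph{tangent pairs} $\{a,b\}$ with $a \in A$ and $b \in B$.

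Next I would count from the other side. Fix an unordered pair $\{p,q\}$ and let $L$ and $R$ be the points of $P \setminus \{p,q\}$ strictly to the left and to the right of the line through $p$ and $q$. I claim that $\overline{pq}$ is an inner common tangent of exactly two separations, namely $\{L \cup \{p\},\, R \cup \{q\}\}$ and $\{L \cup \{q\},\, R \cup \{p\}\}$: once $p$ and $q$ lie on the tangent line, the strict sides $L$ and $R$ are forced into the two parts, and the only remaining freedom is which of $p,q$ joins which side. Combining the two directions, the incidences between separations and tangent pairs are counted as $2 \cdot (\#\text{separations})$ on one hand and as $2 \cdot \binom{n}{2}$ on the other, giving exactly $\binom{n}{2}$ separations.

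The main obstacle is the careful verification of the two structural facts in general position — that every separation has exactly two inner common tangents, each meeting exactly one point per side, and that every pair is the tangent pair of exactly two separations — including the mildly degenerate cases where $\conv(A)$ or $\conv(B)$ is a single point or a segment (these still yield two distinct tangent pairs, e.g.\ the two lines from an isolated point to two hull vertices of the other part). Once these are pinned down, the double-counting argument closes immediately.
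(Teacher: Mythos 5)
Your proof is correct, and although it is built from the same geometric ingredients as the paper's (inner common tangents, and the fact that the line through $p$ and $q$ certifies the two separations $\{L\cup\{p\},R\cup\{q\}\}$ and $\{L\cup\{q\},R\cup\{p\}\}$, where $L,R$ are the points strictly left and right of that line), the counting mechanism is genuinely different --- and, as it happens, more robust. The paper orients everything: it defines a map $f$ sending each of the $2\binom{n}{2}$ ordered pairs $(p,q)$ to the single partition $(L\cup\{p\},\,R\cup\{q\})$, and then argues that every realizable partition has exactly two preimages, namely the two pairs obtained by rotating a separating line clockwise and counterclockwise into the two inner tangents. Your version is symmetric: you count unordered incidences between separations and tangent pairs, with each side contributing a factor of $2$. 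The symmetry pays off precisely where the paper's orientation causes trouble: under $f$, the two orientations $(p,q)$ and $(q,p)$ of one line yield the same unordered partition, so the two preimages of a partition are the two orientations of just \emph{one} of its two inner tangents, while the other inner tangent maps (under both orientations) to the companion separation in which the two tangency points switch sides. Concretely, for four points in convex position the left--right partition has both diagonals as inner tangents, yet under $f$ one diagonal produces the top--bottom partition; so the paper's claim that both rotation outcomes are mapped to the given partition \enquote{if ordered accordingly} is not literally true, and its count survives only because unordered pairs and realizable partitions happen to be in bijection. Your double count establishes exactly that correspondence without ever deciding which tangent \enquote{belongs} to which partition, at the modest cost of verifying the standard tangent facts (exactly two inner tangents, one tangency point per side, and the degenerate point/segment cases) that you explicitly flag and that do hold in general position once $n\geq 3$ rules out the point-versus-point situation.
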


\begin{proof}

Let $\mathcal{P}$ be the set of \emph{realizable partitions} of $P$. Two subsets $P_1, P_2\subsetneq P$ form a \emph{realizable partition} of $P$ if $P_1 \cup P_2 = P$ and $\conv(P_1) \cap \conv(P_2) = \emptyset$, i.e., there is a line separating the convex hulls.

Define the function $f : \{ (p, q) : p, q \in P \} \to \mathcal{P}$, which takes an ordered pair of points from $P$ as input and returns a \emph{realizable partition} $(P_1, P_2)$ of $P$ as follows. For an ordered pair $(p,q)$ consider the directed line $\ell_{pq}$ through the two points (directed from $p$ to $q$). Let 
\begin{align*}
    P_l &= \{ s \in P : s \text{ is left of } \overline{pq} \} \cup p \\
    P_r &= \{ s \in P : s \text{ is right of } \overline{pq} \} \cup q
\end{align*}

and define $f((p,q)) = (P_l, P_r)$.

Since $P$ is in general position, $(P_l, P_r)$ forms a realizable partition (the line $\overline{pq}$ rotated infinitesimally in counterclockwise order separates $P_l$ and $P_r$). Also note that the line $\overline{pq}$ must be tangent to $\conv(P_1)$ and $\conv(P_2)$ and both convex hulls are contained on different sides of $\overline{pq}$.

We will show that for any realizable partition there exist exactly two pairs $(p,q)$ and $(p',q')$ which are mapped to this partition. Since there are $2 \binom{n}{2}$ ordered pairs on $P$, this proves the lemma. 

\begin{figure}
    \centering
    \begin{subfigure}[b]{0.4\textwidth}
        \includegraphics[width=\textwidth,page=1]{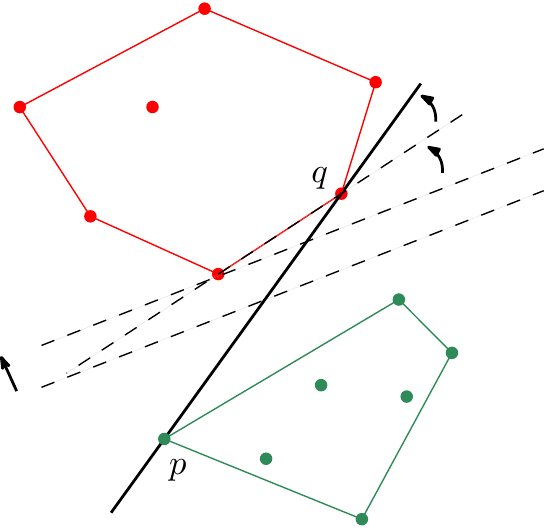}
        \caption{Rotation process to find a tuple $(p,q)$ that is mapped to the given partition.}
    \end{subfigure}
    \hfill
    \begin{subfigure}[b]{0.4\textwidth}
        \includegraphics[width=\textwidth,page=2]{rep_lines.pdf}
        \caption{Result of both rotations (clockwise and counterclockwise).}
    \end{subfigure}
    \caption{Illustration of Lemma~\ref{lem:repr_lines}. The red and green sets form a realizable partition.}
    \label{fig:rep_lines}
\end{figure}

Let $(P_1, P_2) \in \mathcal{P}$ be a realizable partition. Two pairs of points that are mapped to this partition can be found as follows. Consider a line $\ell$ separating $P_1$ and $P_2$ and shift $\ell$ towards $P_1$ until it hits a vertex. Next, rotate $\ell$ (once in clockwise and once in counterclockwise order) until it hits a vertex of the other set. Note that if we a hit a vertex of the same set, we just continue to rotate around this one (see Figure~\ref{fig:rep_lines}~(a)). This way we get exactly two distinct pairs of points (because of general position) that are mapped to the partition $(P_1, P_2)$ if ordered accordingly (see Figure~\ref{fig:rep_lines}~(b)). 

On the other hand, there is no other line tangent to $P_1$ and $P_2$ simultaneously and having both sets entirely contained on different sides. 
\end{proof}

\end{document}